\newtheorem{theorem}{Theorem}
\newtheorem{definition}{Definition}
\newtheorem{proposition}{Proposition}
\begin{document}

\title{Relationship between mis\`{e}re {\sc nim} and two-player {\sc goishi hiroi}}

\author{Tomoaki Abuku\footnote{Gifu University,  buku3416@gmail.com}, Masanori Fukui\footnote{Faculty of Software and Information Science,
Iwate Prefectural University, fukui\_m@iwate-pu.ac.jp}, Shin-ichi Katayama\footnote{Institute of Liberal Arts and Sciences, Tokushima University, shinkatayama@tokushima-u.ac.jp}, Koki Suetsugu\footnote{Waseda University and Hiroshima University, suetsugu.koki@gmail.com}}

\maketitle

\begin{abstract}
In combinatorial game theory, there are two famous winning conventions, normal play and mis\`{e}re play.
Under normal play convention, the winner is the player who moves last and under mis\`{e}re play convention, the loser is the player who moves last. The difference makes these conventions completely different, and usually, games under mis\`{e}re play convention is much difficult to analyze than games under normal play convention.
In this study, we show an interesting relationship between rulesets under different winning conventions; we can determine the winner of two-player {\sc goishi hiroi} under normal play convention by using {\sc nim} under mis\`{e}re play convention. We also analyze two-player {\sc goishi hiroi} under mis\`{e}re play convention.
\end{abstract}

\section{Introduction}
In this paper, we focus on two famous winning conventions of combinatorial games, {\em normal play} convention and {\em mis\`{e}re play} convention. The former means  ``the player who moves last is the winner'' and the later means ``the player who moves last is the loser.''
Usually, the two conventions are completely different and even if a method to determine the player who has a winning strategy is found under one convention, one cannot determine the winner for another convention. Thus, the relationship between two conventions are investigated with interest.


In this paper, we show that there is the same construction in two rulesets under different conventions.

\subsection{Impartial games}
Combinatorial game theory studies two-player perfect information games with no chance. We also assume that in this study, the rulesets are {\em short}, that is the set of options is finite in any position, and the play must end in finite moves.

A ruleset is called {\em impartial} if in every position, the sets of options of both players are the same. One of the most famous impartial ruleset is {\sc nim} \cite{Bou02}.
In {\sc nim}, there are some piles of stones and in each turn, the current player chooses one pile and removes any number of stones.

For impartial games under normal play convention, we can define {\em Sprague-Grundy value} for every position.

Let $\mathbb{N}_0$ be the set of all nonnegative integers.
\begin{definition}
    Let $S$ be a finite set of integers.
    The minimum excluded integer of $S$ is as follows: 
    $$
    {\rm mex}(S) = \min (\mathbb{N}_0 \setminus S).
    $$
\end{definition}
Note that ${\rm mex}(S)$ is defined even if $S$ includes some negative integers. For example, if $S$ includes not only nonnegative integers but also $-1,$ we let $S' = S \setminus \{-1\}$ and ${\rm mex}(S) = \min(\mathbb{N}_0 \setminus S')$. Usually, we only consider the case that $S$ is a finite set of nonnegative integers, but in this paper, we consider the case that $S$ has $-1$ as an element. 

\begin{definition}
    For any position $G,$ its Sprague-Grundy value $\mathcal{G}(G)$ is as follows:
    $$
    \mathcal{G}(G) = {\rm mex}(\{\mathcal{G}(G') \mid G' \text{ is an option of } G\}).
    $$
\end{definition}

When a position is separated into independent components, that is, a player, on their turn, chooses exactly one component and makes one move, we say that the position is a {\em disjunctive sum} of the components. The disjunctive sum of the positions $G$ and $H$ is denoted as $G+H$.

We call a position a $\mathcal{P}$-position if the previous player has a winning strategy. Also, we call a position an $\mathcal{N}$-position if the next player has a winning strategy.

If the set of all positions can be separated into two sets $N$ and $P$ such that 
\begin{itemize}
    \item every position in $N$ has an option in $P$, and
    \item every position in $P$ has no option in $P$,
\end{itemize}

then under normal play convention, $N$ and $P$ are the sets of $\mathcal{N}$-positions and $\mathcal{P}$-positions, respectively. Note that from the second condition, every terminal position is in $P$.

Also, if the set of all positions can be separated into two sets $N'$ and $P'$ such that 
\begin{itemize}
    \item every terminal position is in $N'$,
    \item every non-terminal position in $N'$ has an option in $P'$, and
    \item every position in $P'$ has no option in $P'$,
\end{itemize}
then under mis\`{e}re play convention, $N'$ and $P'$ are the sets of $\mathcal{N}$-positions and $\mathcal{P}$-positions, respectively.

Sprague and Grundy independently showed that Sprague-Grundy value can be used to determine which player has a winning strategy. In addition, the Sprague-Grundy value of a disjunctive sum of components can be calculated by using each component's Sprague-Grundy value and XOR operator.

\begin{theorem}[Sprague \cite{Spr35}, Grundy \cite{Gru39}]
\label{Thmsg}
Let $G$ be a position of an impartial ruleset under normal play convention. The following (1), (2), and (3) hold.
\begin{enumerate}
    \item $\mathcal{G}(G) = 0$ if and only if $G$ is a $\mathcal{P}$-position.
    \item $\mathcal{G}(G) \neq 0$ if and only if $G$ is an $\mathcal{N}$-position.
    \item $\mathcal{G}(G+H) = \mathcal{G}(G) \oplus \mathcal{G}(H)$.
\end{enumerate}
Here, $\oplus$ is the XOR operator for binary notation.

\end{theorem}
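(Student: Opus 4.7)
The plan is to handle (1) and (2) together via the axiomatic characterization of $\mathcal{N}$ and $\mathcal{P}$ positions given earlier in the excerpt, and then prove (3) by structural induction on the game tree of $G+H$.

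For (1) and (2), I would define $P = \{G : \mathcal{G}(G) = 0\}$ and $N = \{G : \mathcal{G}(G) \neq 0\}$ and verify the two bullet conditions listed in the excerpt. If $G \in P$, then $0 \notin \{\mathcal{G}(G') : G' \text{ option of } G\}$ by the definition of mex, so no option of $G$ has Grundy value $0$, i.e.\ no option of $G$ lies in $P$. Conversely, if $G \in N$ with $\mathcal{G}(G) = k > 0$, then since ${\rm mex}$ returns the least missing nonnegative integer, every value $0, 1, \dots, k-1$ must appear among the Grundy values of options of $G$; in particular there is an option with Grundy value $0$, which lies in $P$. By the uniqueness of the $\mathcal{N}$/$\mathcal{P}$ decomposition stated in the excerpt, $N$ and $P$ must be the sets of $\mathcal{N}$- and $\mathcal{P}$-positions, giving both (1) and (2).

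For (3), I would induct on the total height of the game tree of $G+H$. Write $a = \mathcal{G}(G)$, $b = \mathcal{G}(H)$, $c = a \oplus b$, and show (i) no option of $G+H$ has Grundy value $c$, and (ii) every $d$ with $0 \le d < c$ is the Grundy value of some option of $G+H$; this forces $\mathcal{G}(G+H) = c$ by mex. For (i), any option of $G+H$ has the form $G'+H$ or $G+H'$; by the induction hypothesis its Grundy value is $\mathcal{G}(G') \oplus b$ or $a \oplus \mathcal{G}(H')$. If $\mathcal{G}(G') \oplus b = c = a \oplus b$, then $\mathcal{G}(G') = a$, contradicting the mex definition of $\mathcal{G}(G)$; the other case is symmetric. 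For (ii), given $d < c$, let $k$ be the position of the highest bit of $d \oplus c$; since $d < c$, this bit is $1$ in $c$, hence $1$ in exactly one of $a, b$. Assume without loss of generality it is $1$ in $a$. Then $a \oplus d \oplus c < a$, so by the mex definition there is an option $G'$ of $G$ with $\mathcal{G}(G') = a \oplus d \oplus c$, and by induction $\mathcal{G}(G'+H) = (a \oplus d \oplus c) \oplus b = d \oplus c \oplus c = d$.

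The main obstacle is the bit-manipulation step in (ii): establishing that toggling the highest bit of $d \oplus c$ strictly decreases $a$, so that the needed Grundy value $a \oplus d \oplus c$ really does appear among the options of $G$. Once this observation is in place, the remainder is bookkeeping with the induction hypothesis; parts (1) and (2) are essentially immediate from the definition of mex combined with the axiomatic characterization already provided.
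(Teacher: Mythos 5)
Your proposal is correct. Note that the paper does not prove this theorem at all --- it is quoted as a classical result with citations to Sprague and Grundy --- so there is no internal proof to compare against; your argument is the standard one and it is sound. Parts (1) and (2) correctly verify the two bullet conditions of the $\mathcal{N}/\mathcal{P}$ characterization given in the paper (in particular, terminal positions have Grundy value ${\rm mex}(\emptyset)=0$ and so land in $P$, as required under normal play), and in part (3) the key bit-manipulation step is right: if $k$ is the highest bit of $d\oplus c$ with $d<c$, that bit is $1$ in $c$, hence in exactly one of $a,b$, and toggling it (together with lower bits only) strictly decreases that component, so the required option with Grundy value $a\oplus d\oplus c$ (or $b\oplus d\oplus c$) exists by the mex property; the induction on the (finite) game tree of $G+H$ then closes the argument.
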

From this theorem, we can determine which player has a winning strategy from Sprague-Grundy values of the components of disjunctive sum if it is under normal play convention.

Let $(x, y)$ be a pair of nonnegative integers. We also let $G_0$ as 
\begin{eqnarray*}
    G_0((x,y)) = \left \{ \begin{array}{cc}
        0 & (x+y = 0)  \\
        \begin{array}{c}{\rm mex}(\{G_0((x',y)) \mid 0\leq x'<x\}   \\ \cup \{G_0((x,y'))\mid 0\leq y'<y\}) \end{array} & (x + y > 0).
    \end{array} \right.
\end{eqnarray*}
 We will use single brackets like  $G(x, y)$ instead of double brackets $G((x, y))$ if there is no confusion.
Table \ref{tab:G0} shows the values of $G_0(x,y)$.
\begin{table}[htb]
    \centering
    \begin{tabular}{c|cccccccccccc}
         & 0 & 1 & 2 & 3 & 4 & 5 & 6 & 7 & 8 & 9 & 10 & 11 \\ \hline
        0 & 0 & 1 & 2 & 3 & 4 & 5 & 6  & 7 & 8 &9  &10 &11  \\
        1 & 1 & 0 & 3 & 2 & 5 & 4 &7 & 6 & 9 & 8 & 11 & 10 \\
        2 &2 & 3 & 0 & 1 & 6 & 7 &4 & 5 & 10 & 11 & 8 & 9 \\
        3 &3 & 2 & 1 & 0 & 7 & 6 &5 & 4 & 11 & 10 & 9 & 8\\
        4 &4 & 5 & 6 & 7 & 0 & 1 &2 & 3 & 12 & 13 & 14 & 15 \\
        5 &5 & 4 & 7 & 6 & 1 & 0 &3 & 2 & 13 & 12 & 15 & 14 \\
        6 &6 & 7 & 4 & 5 & 2 & 3 &0 & 1 & 14 & 15 & 12 & 13 \\
        7 &7 & 6 & 5 & 4 & 3 & 2 &1 & 0 & 15 & 14 & 13 & 12 \\
        8 & 8 & 9 & 10 & 11 & 12 & 13 & 14 & 15 & 0 & 1 & 2 & 3 \\
        9 & 9 & 8 & 11 & 10 & 13 & 12 & 15 & 14 & 1 & 0 & 3 & 2 \\
        10 & 10 & 11 & 8 & 9 & 14 &15 & 12 & 13 & 2 & 3 & 0 & 1 \\
        11 & 11 & 10 & 9 & 8 & 15 & 14 & 13 & 12 & 3 & 2 & 1 & 0 
    \end{tabular}
    \caption{Values of $G_0(x,y)$}
    \label{tab:G0}
    \end{table}
   
The function $G_0$ is the Sprague-Grundy value of a position $(x, y)$ in two-pile {\sc nim}. Since two-pile {\sc nim} can be considered as a disjunctive sum of single piles, $G_0(x,y) = x\oplus y$ holds. In three-pile {\sc nim}, $(x, y, z)$ is a $\mathcal{P}$-position if and only if $z = G_0(x,y)$.

Next, we consider mis\`{e}re games.

Under normal play convention, the terminal position is a $\mathcal{P}$-position, and its Sprague-Grundy value is $0$. On the other hand, under mis\`{e}re play convention, the terminal position is an $\mathcal{N}$-position. Therefore, to obtain the same corresponding between values and outcomes as Theorem \ref{Thmsg}, the terminal position must have non-zero Sprague-Grundy value.

One way is putting $1$ for terminal positions and for other positions, using the mex-rule as the case of normal play as following $G_1$:
    \begin{eqnarray*}
    G_{1}((x, y)) = \left\{ \begin{array}{cc}
        1 & (x+  y= 0) \\
         \begin{array}{c}{\rm mex}(\{G_{1}((x',y)) \mid 0\leq x'<x\} \\ \cup \{G_{1}((x,y'))\mid 0\leq y'<y\})\end{array} & (x + y> 0) .
    \end{array}\right.
    \end{eqnarray*}
Table \ref{tab:G1} shows the values of $G_1(x,y)$.
\begin{table}[htb]
    \centering
    \begin{tabular}{c|cccccccccccc}
         & 0 & 1 & 2 & 3 & 4 & 5 & 6 & 7 & 8 & 9 & 10 & 11 \\ \hline
        0 & 1 & 0 & 2 & 3 & 4 & 5 & 6  & 7 & 8 &9  &10 &11  \\
        1 & 0 & 1 & 3 & 2 & 5 & 4 &7 & 6 & 9 & 8 & 11 & 10 \\
        2 &2 & 3 & 0 & 1 & 6 & 7 &4 & 5 & 10 & 11 & 8 & 9 \\
        3 &3 & 2 & 1 & 0 & 7 & 6 &5 & 4 & 11 & 10 & 9 & 8\\
        4 &4 & 5 & 6 & 7 & 0 & 1 &2 & 3 & 12 & 13 & 14 & 15 \\
        5 &5 & 4 & 7 & 6 & 1 & 0 &3 & 2 & 13 & 12 & 15 & 14 \\
        6 &6 & 7 & 4 & 5 & 2 & 3 &0 & 1 & 14 & 15 & 12 & 13 \\
        7 &7 & 6 & 5 & 4 & 3 & 2 &1 & 0 & 15 & 14 & 13 & 12 \\
        8 & 8 & 9 & 10 & 11 & 12 & 13 & 14 & 15 & 0 & 1 & 2 & 3 \\
        9 & 9 & 8 & 11 & 10 & 13 & 12 & 15 & 14 & 1 & 0 & 3 & 2 \\
        10 & 10 & 11 & 8 & 9 & 14 &15 & 12 & 13 & 2 & 3 & 0 & 1 \\
        11 & 11 & 10 & 9 & 8 & 15 & 14 & 13 & 12 & 3 & 2 & 1 & 0 
    \end{tabular}
    \caption{Values of $G_1(x,y)$}
    \label{tab:G1}
    \end{table}

It is easy to find that $G_0(0, 0) = G_0(1,1) = G_1(0, 1) = G_1(1, 0) = 0, G_0(0, 1)= G_0(1, 0) = G_1(0, 0) = G_1(1, 1) = 1$ and for any other $(x, y),$ $G_0(x, y) = G_1(x, y)$.

Such rulesets, that is only some positions' $0$ and $1$ are swapped, are known to easy to analyze (for the details, see chapter 5 of \cite{Sie}).
Therefore, setting $1$ for terminal positions is meaningful. However, in this study, we consider another way. That is, putting $-1$ for the terminal position of mis\`{e}re {\sc nim} as following $G_{-1}$:

    \begin{eqnarray*}
    G_{-1}((x, y)) = \left\{ \begin{array}{cc}
        -1 & (x+  y= 0) \\
         \begin{array}{c}{\rm mex}(\{G_{-1}((x',y)) \mid 0\leq x'<x\} \\ \cup \{G_{-1}((x,y'))\mid 0\leq y'<y\})\end{array} & (x + y> 0) ,
    \end{array}\right.
    \end{eqnarray*}

Two-heap {\sc nim} under mis\`{e}re play can be considered as a situation such that any move to $(0, 0)$ is forbidden in two-heap {\sc nim} under normal play. In this sense, $G_{-1}$ is the function of Sprague-Grundy values of this ruleset.  
We also define a function $G^*_{-1}$ as 

\begin{eqnarray*}
    G^*_{-1}((x,y)) = \left \{
    \begin{array}{cc}
        0 & (x+ y=0) \\
        -1 &(x+ y = 1) \\
        \begin{array}{c}{\rm mex}(\{G^*_{-1}((x',y)) \mid 0\leq x'<x\} \\ \cup \{G^*_{-1}((x,y'))\mid 0\leq y'<y\})\end{array} & (x + y > 1).
    \end{array}\right.
    \end{eqnarray*}

  Tables \ref{tab:Ginf}, and \ref{tab:Ginfinf} show the values of $ G_{-1}(x,y),$ and $G^*_{-1}(x,y),$ respectively.
 $G^*_{-1}$ is Sprague-Grundy values for two-heap {\sc nim} under normal play convention where moves to $(0, 1)$ and moves to $(1, 0)$ are forbidden.
  $G^*_{-1}$ has the same $0$-positions as $G_0$, but this has two $-1$s from its definition; thus many values are different from $G_0$.

    \begin{table}
    \centering
    \begin{tabular}{c|cccccccccccc}
         & 0 & 1 & 2 & 3 & 4 & 5 & 6 & 7 & 8 & 9 & 10 & 11 \\ \hline
        0 & $-1$ & 0 & 1 & 2 & 3 & 4 & 5  & 6 & 7 & 8 & 9 & 10 \\
        1 & 0 & 1 & 2 & 3 & 4 & 5 &6 & 7 & 8 & 9 & 10 & 11 \\
        2 &1 & 2 & 0 & 4 & 5 & 3 &7 & 8 & 6 & 10 & 11 & 9 \\
        3 &2 & 3 & 4 & 0 & 1 & 6 &8 & 5 & 9 & 7 & 12 & 13 \\
        4 &3 & 4 & 5 & 1 & 0 & 2 &9 & 10 & 11 & 6  & 7 & 8\\
        5 &4 & 5 & 3 & 6 & 2 & 0 &1 & 9 & 10 & 11 & 8 & 7\\
        6 &5 & 6 & 7 & 8 & 9 & 1 &0 & 2 & 3 & 4 & 13 & 12 \\
        7 &6 & 7 & 8 & 5 & 10 & 9 &2 & 0 & 1 & 3 & 4 & 14 \\
        8 & 7 & 8 & 6 & 9 & 11 & 10 & 3 & 1 & 0 & 2 & 5 & 4 \\
        9 & 8 & 9 & 10 & 7 & 6 & 11 & 4 & 3 & 2 & 0 & 1 & 5 \\
        10 & 9 & 10 & 11 & 12 & 7 & 8 & 13 & 4 & 5 & 1 & 0 & 2\\
        11 & 10 & 11 & 9 & 13 & 8 & 7 & 12 & 14 & 4 & 5 & 2 & 0
    \end{tabular}
    \caption{Values of $G_{-1}(x,y)$}
    \label{tab:Ginf}
    \end{table}
    \begin{table}
    \centering
    \begin{tabular}{c|cccccccccccc}
         & 0 & 1 & 2 & 3 & 4 & 5 & 6 & 7 & 8 & 9 & 10 & 11 \\ \hline
        0 & 0 & $-1$ & 1 & 2 & 3 & 4 & 5  & 6 & 7 & 8 & 9 & 10 \\
        1 & $-1$ & 0 & 2 & 1 & 4 & 3 &6 & 5 & 8 & 7 & 10 & 9 \\
        2 &1 & 2 & 0 & 3 & 5 & 6 &4 & 7 & 9 & 10 & 8 & 11 \\
        3 &2 & 1 & 3 & 0 & 6 & 5 &7 & 4 & 10 & 9 & 11 & 8\\ 
        4 &3 & 4 & 5 & 6 & 0 & 1 &2 & 8 & 11 & 12 & 7 & 13 \\
        5 &4 & 3 & 6 & 5 & 1 & 0 &8 & 2 & 12 & 11 & 13 & 7 \\
        6 &5 & 6 & 4 & 7 & 2 & 8 &0 & 1 & 3 & 13 & 12 & 14 \\
        7 &6 & 5 & 7 & 4 & 8 & 2 &1 & 0 & 13 & 3 & 14 & 12  \\
        8 & 7 & 8 & 9 & 10 & 11 & 12 & 3 & 13 & 0 & 1 & 2 & 4 \\
        9 & 8 & 7 & 10 & 9 & 12 & 11 & 13 & 3 & 1 & 0 & 4 & 2 \\
        10 & 9 & 10 & 8 & 11 & 7 & 13 & 12 & 14 & 2 & 4 & 0 & 1 \\
        11 & 10 & 9 & 11 & 8 & 13 & 7 & 14 & 12 & 4 & 2 & 1 & 0
    \end{tabular}
    \caption{Values of $G^*_{-1}(x,y)$}
    \label{tab:Ginfinf}
\end{table}

In this paper, we show that these functions can be used to analyze another ruleset, two-player {\sc goishi hiroi}.

\subsection{{\sc Goishi hiroi} and two-player {\sc goishi hiroi}}

We consider two-player {\sc goishi hiroi}. {\sc Goishi hiroi} (or, {\sc hiroimono}) is a traditional Japanese puzzle (\cite{Tag}).
In this puzzle, some stones are arranged on a lattice board. 
The player can do the following move until Step $n$ for arbitrary $n \geq 0$.
\begin{itemize}
    \item [Step $0$] : The player picks up one stone. Then, they choose one direction from up, down, left, and right.
    \item [Step $n$] : The player traces the lattice board from the stone picked up in Step $n-1$ in the current direction until reaches another stone and picks up the stone. Then, they choose one direction except for the reverse direction of the current direction.
\end{itemize}
If a player can pick up all stones in one move, we say that the player has solved the given puzzle. 
Figure \ref{fig:puzzleexp} shows one problem and solution in {\sc goishi hiroi}.
The complexity of {\sc goishi hiroi} has been studied in \cite{And07} and \cite{FSS17}.

It is known that two-player {\sc goishi hiroi} was also played, but the details of the rule had been lost.
From the remaining literature, we propose the following rules:
\begin{itemize}
    \item Some stones colored black or white are on a lattice board. 
    \item Each player, on their turn, do the following move until Step $n$ for arbitrary $ n \geq 0$.
    \begin{itemize}
    \item [Step $0$]: The current player picks up one stone. Then, they choose one direction from up, down, left, and right.
    \item [Step $n$]: The current player traces the lattice board from the stone picked up in Step $n-1$ in the current direction until reaches another stone. The player picks up the stone if the stone has the same color as the previous stone, and otherwise, the move ends. If the player could pick up a stone, they choose one direction except for the reverse direction of the current direction.
\end{itemize}
\item When all stones have been removed, the game ends.
\end{itemize}
The general case of this ruleset is quite complex, so in this study, we consider a restricted version of this ruleset.
We assume that all the stones are on the same line. In addition, at most two pairs of adjacent stones have different colors, so each position can be denoted as $(x, y, z)(x, y, z \in \mathbb{N}_0).$

Figure \ref{fig:positionexmp} shows the position $(x, y, z)$ in this ruleset and Figure \ref{fig:sequence} shows a sequence of moves on this ruleset. This ruleset is similar to three-pile {\sc nim.} However, in this ruleset, when the middle pile becomes empty, the current player can immediately win by removing all stones from both remaining piles. 

Note that this linear version is considered in another study \cite{MYS24} as ``StrNim''. However, in the study, the relationship between this ruleset and mis\`{e}re {\sc nim}, which is discussed here, is not considered.
\begin{figure}[htb]
    \centering
    \includegraphics[width = 6cm]{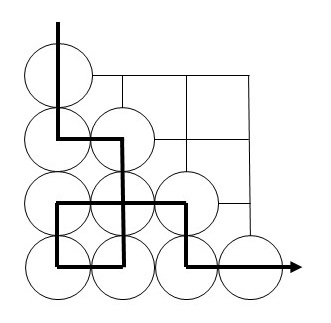}
    \caption{Puzzle and solution in {\sc goishi hiroi}}
    \label{fig:puzzleexp}
\end{figure}
\begin{figure}
    \centering
\includegraphics[width = 9cm]{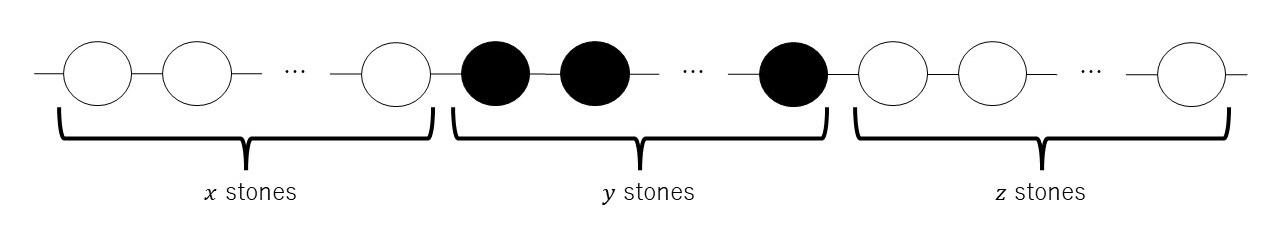}
    \caption{Position $(x, y, z)$}
    \label{fig:positionexmp}
\end{figure}

\begin{figure}[htb]
    \centering
\includegraphics[width = 9cm]{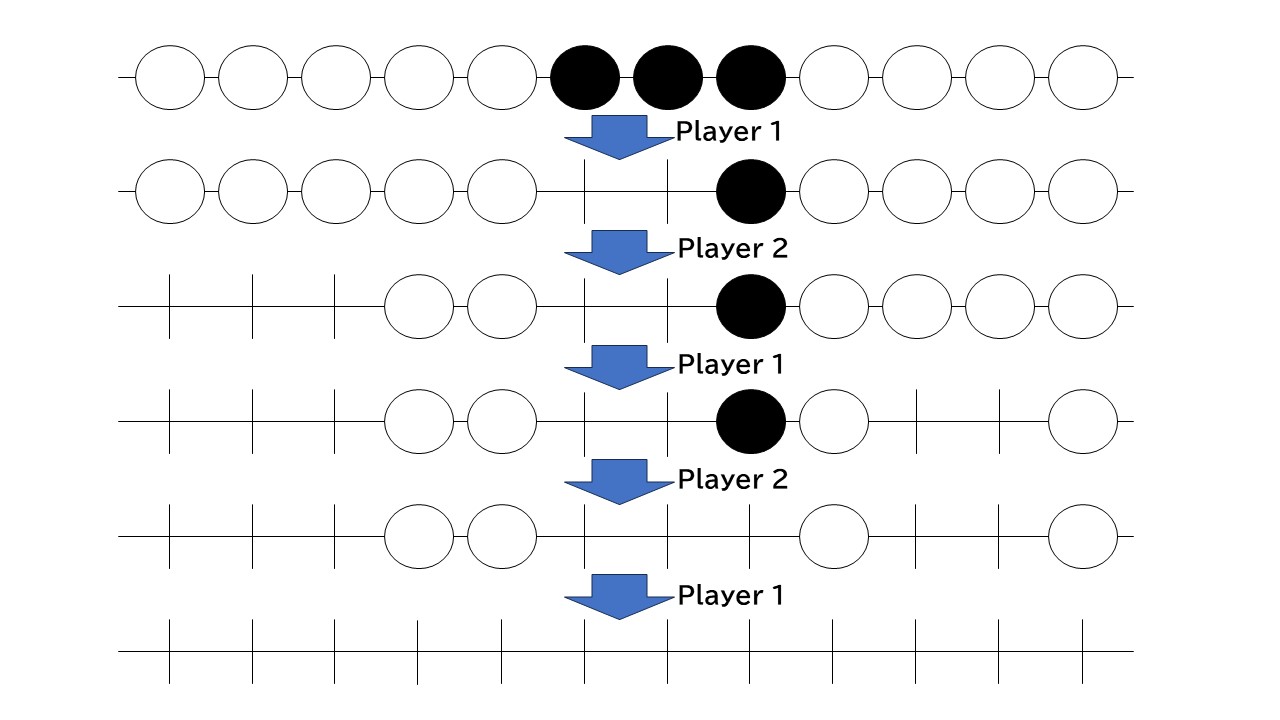}
    \caption{Sequence of moves: $(5, 3, 4) \rightarrow (5, 1, 4) \rightarrow (2, 1, 4) \rightarrow (2, 1, 2) \rightarrow (2, 0 , 2) \rightarrow (0, 0, 0)$} 
    \label{fig:sequence}
\end{figure}

\section{Main results}

We show a relationship between two-player {\sc goishi hiroi} under normal play convention and mis\`{e}re {\sc nim} as follows:

\begin{theorem}
    A position $(x, y, z)$ in two-player {\sc goishi hiroi} under normal play convention is a $\mathcal{P}$-position if and only if $y = G_{-1}(x, z) + 1$. 
\end{theorem}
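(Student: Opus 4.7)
The plan is to define $P = \{(x, y, z) \in \mathbb{N}_0^3 : y = G_{-1}(x, z) + 1\}$ and $N = \mathbb{N}_0^3 \setminus P$, and then verify the three conditions from the excerpt characterising normal-play outcome classes: the terminal $(0, 0, 0)$ lies in $P$, no option of a position in $P$ lies in $P$, and every position in $N$ has some option in $P$.

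Before that, I need a clean description of the available moves from $(x, y, z)$. When $y > 0$, the white-black-white colour pattern confines each move to one group, so the options are exactly the $(x', y, z)$ with $x' < x$, the $(x, y', z)$ with $y' < y$, and the $(x, y, z')$ with $z' < z$. When $y = 0$, the tracing rule skips empty squares, so the two white groups behave as a single chain and there is an additional \emph{sweeping} move that clears every remaining stone at once and lands at $(0, 0, 0)$.

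The three conditions can then be checked as follows. For the terminal, $G_{-1}(0, 0) = -1$ gives $y = 0 = G_{-1}(0, 0) + 1$, so $(0, 0, 0) \in P$. For closure, suppose $(x, y, z) \in P$ with $y > 0$: the mex-definition of $G_{-1}$ forces $G_{-1}(x', z)$ and $G_{-1}(x, z')$ to differ from $G_{-1}(x, z)$ whenever $x' < x$ or $z' < z$, any decrease of $y$ misses $y$ itself, and the sweeping move is unavailable, so no option lies in $P$; and if $(x, y, z) \in P$ with $y = 0$, the condition forces $(x, z) = (0, 0)$ and there are no options at all. For reachability, take $(x, y, z) \in N$. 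If $y > G_{-1}(x, z) + 1$, reducing $y$ to $G_{-1}(x, z) + 1 \geq 0$ reaches $P$. If $y \leq G_{-1}(x, z)$ and $y = 0$, then $(x, z) \neq (0, 0)$, so the sweeping move reaches $(0, 0, 0) \in P$. If $y \leq G_{-1}(x, z)$ and $y \geq 1$, then $0 \leq y - 1 < G_{-1}(x, z)$, and the mex property produces either $x' < x$ with $G_{-1}(x', z) = y - 1$ or $z' < z$ with $G_{-1}(x, z') = y - 1$, giving the required move.

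The main conceptual point, rather than a deep technical obstacle, is why the shift $+1$ is the correct adjustment of the usual three-pile \textsc{nim} relation. The sweeping move turns every position in the slice $y = 0$ except the terminal into an $\mathcal{N}$-position, so the $\mathcal{P}$-set in that slice must be exactly $\{(0, 0, 0)\}$; assigning the sentinel value $-1$ to $G_{-1}(0, 0)$ and writing the condition as $y = G_{-1}(x, z) + 1$ absorbs this exception cleanly, and on $(x, z) \neq (0, 0)$ the mex-recursion for $G_{-1}$ mirrors the $x$- and $z$-moves of two-player {\sc goishi hiroi} exactly, so the standard mex argument delivers the equivalence.
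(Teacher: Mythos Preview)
Your proof is correct and follows essentially the same approach as the paper's: define the candidate $P$ and $N$ sets, verify that $(0,0,0)\in P$, that $P$-positions have no $P$-options (using that $G_{-1}(x',z)\neq G_{-1}(x,z)$ for $x'<x$ from the mex definition, and that decreasing $y$ cannot hit $y$ again), and that every $N$-position has a $P$-option (reduce $y$ if it is too large, sweep to $(0,0,0)$ if $y=0$, and otherwise use the mex property to find a suitable $x'$ or $z'$). Your explicit description of the move set and the remark on why the $+1$ shift is forced are a bit more detailed than the paper's write-up, but the logical structure is identical.
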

\begin{proof}

We prove this by induction on $x + y + z$.
Let $N = \{(x, y, z) \mid y \neq G_{-1}(x, z) + 1\}$ and $P = \{(x, y, z) \mid y = G_{-1}(x, z)+1\}$. We prove that every position $g \in N$ has an option $g' \in P$ and every position $g \in P$ has no option $g' \in P$. It is enough to say that $N$ (resp. $P$) is the set of $\mathcal{N}$-positions (resp. $\mathcal{P}$-positions).

If $x + y + z = 0$, then $(x, y, z) \in P$ and has no option in $N$.

Assume that $x + y + z > 0$ and $g = (x, y, z) \in P$. Then, from the definition of $G_{-1}, y > 0$.
We consider an option $g'$ of $g$.
If $g' = (x', y, z)(x' < x)$, then $G_{-1}(x', z) \neq G_{-1}(x, z)$
 since $G_{-1}(x, z) = {\rm mex}(\{G_{-1}(x', z) \mid x' < x\} \cup \{G_{-1}(x, z') \mid z' < z\}).$ Therefore, $y \neq G_{-1}(x', z) + 1$.
 Similarly, if $g' = (x, y, z')(z' < z),$ then $y \neq G_{-1}(x, z') + 1.$
Furthermore, if $g' = (x, y', z)(y' < y),$ it is trivial $y' \neq G_{-1}(x, z) + 1$ because $y' < y$.

Therefore, every position $g \in P$ has no option in $P$.

Then, we consider the case that $x + y + z > 0$ and $g = (x, y, z) \in N$.
 If $y > G_{-1}(x, z) + 1, g$ has an option $g' = (x, G_{-1}(x, z)+ 1, z) \in P.$ Thus, we assume that $y < G_{-1}(x, z) + 1.$ If $y = 0, g$ has an option $g' = (0, 0, 0) \in P.$ Otherwise, since $G_{-1}(x, z) = {\rm mex} (\{G_{-1}(x',z) \mid 0\leq x'<x\}  \cup \{G_{-1}(x,z')\mid 0\leq z'<z\})$ we can assume that without loss of generality, there exists $x' < x$ such that $G_{-1}(x',z) = y- 1$. Thus, $g$ has an option $g' = (x', y, z)$ such that $y = G_{-1}(x', z) + 1$.

Therefore, $P$ is the set of all $\mathcal{P}$-positions and $N$ is the set of all $\mathcal{N}$-positions.
\end{proof}

We also show a relationship between two-player {\sc goishi hiroi} under mis\`{e}re play convention and function $G^*_{-1}$.

\begin{theorem}
    A position $(x, y, z)$ in two-player {\sc goishi hiroi} under mis\`{e}re play convention is a $\mathcal{P}$-position if and only if $y = G^*_{-1}(x, z) + 1$.
\end{theorem}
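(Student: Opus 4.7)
I plan to imitate the inductive structure of the proof of Theorem 1, with $G^*_{-1}$ in place of $G_{-1}$ and the misère characterization conditions in place of the normal-play ones. Let $N = \{(x,y,z) \mid y \neq G^*_{-1}(x,z) + 1\}$ and $P = \{(x,y,z) \mid y = G^*_{-1}(x,z) + 1\}$, and proceed by induction on $x + y + z$. The three properties to verify are that $(0,0,0) \in N$, that every $g \in P$ has no option in $P$, and that every non-terminal $g \in N$ has some option in $P$.

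The terminal check is immediate: $G^*_{-1}(0,0) + 1 = 1 \neq 0$. For the ``no $P$-to-$P$'' direction, positions $(x,y,z) \in P$ with $y > 0$ can be handled exactly as in Theorem 1, because the definition of $G^*_{-1}$ as a mex forces every single-coordinate reduction to change either the relevant $G^*_{-1}$-value or the value of $y$, and hence to land outside $P$. The only $P$-positions with $y = 0$ are $(1,0,0)$ and $(0,0,1)$, each of whose single available option $(0,0,0)$ lies in $N$.

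The non-trivial direction splits into cases according to the relation between $y$ and $G^*_{-1}(x,z) + 1$. If $y > G^*_{-1}(x,z) + 1$, decrease $y$ directly to $G^*_{-1}(x,z) + 1$, which is a nonnegative integer since $G^*_{-1} \geq -1$. If $0 < y < G^*_{-1}(x,z) + 1$, then $0 \leq y - 1 < G^*_{-1}(x,z)$, so the mex property produces either some $x' < x$ with $G^*_{-1}(x',z) = y - 1$ or some $z' < z$ with $G^*_{-1}(x,z') = y - 1$, giving a reduction into $P$. The pairs $(x,z) \in \{(0,1),(1,0)\}$ have $G^*_{-1} = -1$, so this subcase is vacuous there.

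The main obstacle, and where the proof genuinely diverges from that of Theorem 1, is the remaining subcase $y = 0$ with $G^*_{-1}(x,z) \geq 0$: the escape move $(x,0,z) \to (0,0,0)$ used in Theorem 1 no longer lands in $P$ under misère convention. I will invoke the additional ``partial-sweep'' moves available when the middle group has been emptied and the two outer groups share a color: a single trace across the collapsed middle realizes, for $x, z \geq 1$, the options $(a,0,0)$ with $0 \leq a < x$ and $(0,0,c)$ with $0 \leq c < z$. A short case analysis then shows that at least one of the $P$-positions $(1,0,0)$ or $(0,0,1)$ is always reachable from such an $N$-position: take $a = 1$ whenever $x \geq 2$, take $c = 1$ whenever $z \geq 2$, and dispose of the remaining small positions of the forms $(0,0,k)$, $(k,0,0)$, $(1,0,k)$, $(k,0,1)$, and $(1,0,1)$ by a standard single-coordinate reduction to $(0,0,1)$ or $(1,0,0)$.
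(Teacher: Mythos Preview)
Your proof is correct and follows essentially the same inductive argument as the paper's own proof: the paper defines the same sets $N$ and $P$, verifies the same base cases $(0,0,0)\in N$ and $(1,0,0),(0,1,0),(0,0,1)\in P$, and handles both the ``no $P$-option from $P$'' and ``some $P$-option from non-terminal $N$'' directions in the same way. The only cosmetic difference is that in the $y=0$ subcase the paper asserts in a single line that $g$ has an option $(1,0,0)\in P$ or $(0,0,1)\in P$, whereas you spell out the partial-sweep moves and run through a short case analysis to reach the same conclusion.
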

\begin{proof}
    We prove this by induction on $x + y + z$. Let $N = \{(x, y, z) \mid y \neq G^*_{-1}(x,z) + 1\}$ and $P = \{(x, y, z) \mid y = G^*_{-1}(x,z)+1\}$.
    We prove that every position $g \in N$  has an option $g' \in P$ or $g = (0, 0, 0)$ and every position $g \in P$ has no option $g' \in P$.
    It is enough to say that $N$ (resp. $P$) is the set of $\mathcal{N}$-positions (resp. $\mathcal{P}$-positions).

    If $(x ,y ,z) = (0, 0, 0),$ then $y \neq G^*_{-1}(x,z) + 1,$ and $(x, y ,z) \in N$.
    If $(x, y, z) = (1, 0, 0), (x, y, z) = (0, 1, 0),$ or $(x, y ,z)=(0, 0, 1),$ then $y = G^*_{-1}(x,z) + 1,$ and $(x, y ,z) \in P$. In these cases, every position has only one option $(0, 0, 0)$ which belongs to $N$.

Assume that $x + y + z > 1$ and $g = (x, y, z) \in P$. Then, from the definition of $G^*_{-1}, y > 0$.
We consider an option $g'$ of $g$.
If $g' = (x', y, z)(x' < x)$, then $G^*_{-1}(x', z) \neq G^*_{-1}(x, z)$
 since $G^*_{-1}(x, z) = {\rm mex}(\{G^*_{-1}(x', z) \mid x' < x\} \cup \{G^*_{-1}(x, z') \mid z' < z\}).$ Therefore, $y \neq G^*_{-1}(x', z) + 1$.
 Similarly, if $g' = (x, y, z')(z' < z),$ then $y \neq G^*_{-1}(x, z') + 1.$
Furthermore, if $g' = (x, y', z)(y' < y),$ it is trivial $y' \neq G^*_{-1}(x, z) + 1$ because $y' < y$.

Therefore, every position $g \in P$ has no option in $P$.

Then, we consider the case that $x + y + z > 1$ and $g = (x, y, z) \in N$.
If $y > G^*_{-1}(x, z) + 1, g$ has an option $g' = (x, G^*_{-1}(x, z)+ 1, z) \in P.$ Thus we assume that $y < G^*_{-1}(x, z) + 1.$ If $y = 0, g$ has an option $g' = (1, 0, 0) \in P$ or $g'' = (0,0,1) \in P$. Otherwise, since $G^*_{-1}(x, z) = {\rm mex} (\{G^*_{-1}(x',z) \mid 0\leq x'<x\}  \cup \{G^*_{-1}(x,z')\mid 0\leq z'<z\}),$ we can assume that without loss of generality, there exists $x' < x$ such that $G^*_{-1}(x',z) = y- 1$. Thus, $g$ has an option $g' = (x', y, z)$ such that $y = G^*_{-1}(x', z) + 1$.

Therefore, $P$ is the set of all $\mathcal{P}$-positions and $N$ is the set of all $\mathcal{N}$-positions.
    
\end{proof}

Unfortunately, unlike the function $G_0$, it seems difficult to find closed formulas for the functions $G_{-1}$ and $G^*_{-1}$.
However, we find necessary and sufficient conditions for the cases that the return values of the functions are small.
\begin{theorem}
The following (1), (2), (3), and (4) hold.
\begin{enumerate}
    \item 
    Let $A_0 = \{(n,n)\mid n \geq 2\}$. Then
    $G_{-1}(x,y) = 0$ if and only if $(x, y) \in A_0 \cup  \{(0, 1), (1, 0)\}$.
    \item Let $A_1 = \{(2n, 2n-1), (2n - 1, 2n) \mid n \geq 2\}$. 
    Then $G_{-1}(x,y) = 1$ if and only if $(x, y) \in A_1 \cup \{(0, 2), (1, 1), (2, 0)\}.$
    \item Let $A_2 = \{(2n, 2n+1), (2n+1, 2n) \mid n \geq 2\}$.
    Then $G_{-1}(x,y) = 2$ if and only if $(x, y) \in A_2 \cup \{(0,3), (1,2), (2,1), (3,0)\}.$
    \item Let $A_3 = \{(4n-2, 4n), (4n-1, 4n+1), (4n, 4n-2), (4n +1, 4n-1)  \mid n\geq 2\}.$
    Then $G_{-1}(x,y) = 3$ if and only if $(x, y) \in A_3 \cup \{(0,4), (1,3), (2, 5), (3,1), (4,0), (5, 2)\}.$
    
\end{enumerate}
\end{theorem}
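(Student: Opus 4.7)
The plan is to prove all four parts by a simultaneous induction on $x + y$. The recursive definition gives, for $x + y > 0$,
\[
G_{-1}(x,y) = \mathrm{mex}\bigl(\{G_{-1}(x',y) \mid x' < x\} \cup \{G_{-1}(x,y') \mid y' < y\}\bigr),
\]
so $G_{-1}(x,y) = k$ precisely when each of $0, 1, \ldots, k-1$ appears as some $G_{-1}$-value among the row-predecessors $(x',y)$ or column-predecessors $(x,y')$, while $k$ itself appears among none of them. The inductive hypothesis tells us exactly which predecessors lie in each of $A_0, A_1, A_2, A_3$, so we can read off presence or absence of each value in $\{0,1,2,3\}$ in the mex set directly.

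First I would handle the finitely many exceptional positions listed in each part by direct inspection against Table \ref{tab:Ginf}; these serve as the base cases and absorb the anomalies caused by $G_{-1}(0,0) = -1$. For a generic $(x,y)$ with $x+y$ beyond the exceptional range, I would prove each biconditional in two directions. In the forward direction, assuming $(x,y) \in A_k$, I intersect each $A_j$ with $j \leq k$ against the row and column predecessors of $(x,y)$, verify that values $0, 1, \ldots, k-1$ all appear, and check that no predecessor has value $k$. In the converse direction, for $(x, y) \notin A_k$ with $x+y$ suitably large, I would either exhibit a predecessor with $G_{-1} = k$, or point to a missing value in $\{0, \ldots, k-1\}$, thereby concluding $G_{-1}(x,y) \neq k$.

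The main obstacle is part (4). The set $A_3$ decomposes into four arithmetic progressions of step $4$, indexed by $x \bmod 4$, and identifying which $A_j$ contains the predecessors of a generic $(x,y) \in A_3$ requires a case analysis on $(x \bmod 4, y \bmod 4)$. For example, for $(x,y) = (4n-2, 4n) \in A_3$, the column predecessors $(4n-2, 4n-2) \in A_0$, $(4n-2, 4n-3) \in A_1$, and $(4n-2, 4n-1) \in A_2$ supply values $0, 1, 2$, while the unique element of $A_3$ in row $4n$ or column $4n-2$ is $(4n-2, 4n)$ itself, so no predecessor carries value $3$. The same analysis must be repeated for each of the four families in $A_3$ (together with the symmetric families across the diagonal), and the converse direction must rule out $G_{-1}(x,y) = 3$ for all other positions of sufficiently large coordinate sum. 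The bookkeeping, though routine, constitutes the bulk of the work.
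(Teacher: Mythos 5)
Your plan follows essentially the same route as the paper's proof: a simultaneous induction whose base cases are the finitely many exceptional positions checked directly, and where, for a generic member of $A_k$, the mex recursion is evaluated by locating predecessors in $A_0,\dots,A_{k-1}$ (supplying the values $0,\dots,k-1$) and showing no predecessor carries the value $k$, with the converse handled by exhibiting a value-$k$ predecessor (or appealing to an earlier part) for positions outside the listed sets. The only cosmetic difference is that you rule out value $k$ among predecessors via the ``only if'' half of the induction hypothesis, whereas the paper re-derives the same fact by pointing to a value-$k$ position among each predecessor's own options; the case analysis, including your worked family $(4n-2,4n)$, matches the paper's.
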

\begin{proof}
We prove by induction.
\begin{enumerate}
    \item It is easy to confirm that $G_{-1}(0,1) = G_{-1}(1, 0) = 0$.
    Let $n \geq 2$. For any $n'<n$, $G_{-1}(n, n') = {\rm mex}(\{G_{-1}(x', n') \mid 0 \leq x' < n\} \cup \{G_{-1}(n, y') \mid 0 \leq y' < n'\})\neq 0$ because $G_{-1}(n', n') = 0$ if $n' \geq 2$ and $G_{-1}(0, 1) = G_{-1}(1,0)= 0.$ In a similar way, for any $n' < n, G_{-1}(n', n) \neq 0$. Therefore, $G_{-1}(n, n) = {\rm mex}(\{G_{-1}(n', n) \mid 0\leq n' < n \} \cup \{G_{-1}(n, n') \mid 0\leq n' < n\}) = 0.$
    \item It is easy to confirm that $G_{-1}(0, 2) = G_{-1}(1, 1) =  G_{-1}(2, 0) = 1$.
    Let $n \geq 2.$
    For any $n' < n, G_{-1}(2n, 2n') \neq 1$ and $G_{-1}(2n, 2n' - 1) \neq 1$ because $G_{-1}(2n'-1, 2n') = 1$ if $n' \geq 2, G_{-1}(2n', 2n' - 1) = 1$ if $n' \geq 2$, and $G_{-1}(0, 2) = G_{-1}(1, 1) = G_{-1}(2, 0) = 1.$ Similarly, for any $n' < n, G_{-1}(2n', 2n-1) \neq 1$ and $G_{-1}(2n' - 1, 2n-1) \neq 1$ hold. In addition, from (1), $G_{-1}(2n - 1, 2n - 1) = 0$. Therefore, $G_{-1}(2n, 2n-1) = {\rm mex}(\{G_{-1}(x', 2n-1)\mid 0\leq x' < 2n\} \cup \{G_{-1}(2n, y') \mid 0 \leq y' < 2n - 1)\}) = 1$. We can also show $G_{-1}(2n-1, 2n) = 1$ by symmetry.    
    \item It is easy to confirm that $G_{-1}(0, 3) = G_{-1}(1, 2) = G_{-1}(2, 1) = G_{-1}(3, 0) =  2$. 
    Let $n \geq 2$.
    For any $n' < n, G_{-1}(2n , 2n') \neq 2$ and $G_{-1}(2n, 2n' + 1) \neq 2$ because $G_{-1}(2n'+1, 2n') = 2$ if $n' \geq 2,$  $G_{-1}(2n', 2n' + 1) = 2$ if $n' \geq 2,$ and $G_{-1}(0, 3) = G_{-1}(1,2 ) = G_{-1}(2, 1) = G_{-1}(3, 0) = 2$. Similarly, for any $n' < n, G_{-1}(2n', 2n+ 1) \neq 2$ and $G_{-1}(2n'+1, 2n+1) \neq 2$ hold. In addition, from (1), $G_{-1}(2n, 2n) = 0$ and from (2), $G_{-1}(2n, 2n-1) = 1$. Therefore, $G_{-1}(2n, 2n+1) = {\rm mex}(\{G_{-1}(x', 2n+1) \mid 0 \leq x' < 2n\} \cup \{G_{-1}(2n, y') \mid 0 \leq y' < 2n + 1\}) = 2$. We can also show $G_{-1}(2n + 1, 2n) = 2$ by symmetry.
    
    \item It is easy to confirm that $G_{-1}(0, 4) = G_{-1}(1, 3) = G_{-1}(2, 5) = G_{-1}(3, 1) = G_{-1}(4, 0) = G_{-1}(5, 2) = 3.$ 
    Let $n \geq 2$.
    For any $n' < n, G_{-1}(4n - 2, 4n' - 2) \neq 3, G_{-1}(4n - 2, 4n' - 1) \neq 3, G_{-1}(4n - 2, 4n') \neq 3,$ and $G_{-1}(4n - 2, 4n' + 1) \neq 3$ because $G_{-1}(4n', 4n' - 2) = 3$ if $n' \geq 2, G_{-1}(4n' +1, 4n' - 1) = 3$ if $n' \geq 2, G_{-1} (4n' - 2, 4n') = 3$ if $n' \geq 2, G_{-1}(4n' - 1, 4n' + 1) = 3$ if $n' \geq 2, $ and  $G_{-1}(2, 5)= G_{-1}(0, 4) = G_{-1}(1, 3) = G_{-1}(5, 2) = G_{-1}(3, 1) = G_{-1}(4, 0) = 3.$ Similarly, for any $n'< n, G_{-1}(4n' - 2, 4n) \neq 3, G_{-1}(4n' - 1, 4n) \neq 3, G_{-1}(4n', 4n) \neq 3,$ and $G_{-1}(4n' + 1, 4n) \neq 3$ hold.
    In addition, from (1), $G_{-1}(4n-2, 4n-2) = 0,$ from (2), $G_{-1}(4n - 2, 4n - 3) = 1,$ and from (3), $G_{-1}(4n-2, 4n-1) = 2.$
    Therefore, $G_{-1}(4n-2, 4n) = {\rm mex}(\{G_{-1}(x', 4n) \mid 0 \leq x' < 4n - 2\} \cup \{G_{-1}(4n-2, y') \mid 0 \leq y' < 4n\}) = 3.$
    We can also show $G_{-1}(4n-1,4n+1) = G_{-1}(4n, 4n- 2) = G_{-1}(4n + 1, 4n - 1) = 3$ in similar ways.

\end{enumerate}    
\end{proof}

\begin{theorem}
The following (1), (2), (3), and (4) hold.
\begin{enumerate}
    \item 
    Let $B_0 = \{(n,n)\mid n \geq 0\}$. Then $G^*_{-1}(x,y) = 0$ if and only if $(x, y) \in B_0$.
    \item Let $B_1 = \{(2n, 2n+1), (2n+1, 2n) \mid n \geq 2\}$.
    Then $G^*_{-1}(x,y) = 1$ if and only if $(x, y) \in B_1 \cup \{(0,2), (1,3), (2,0), (3,1)\}$
    \item Let $B_2 = \{(4n+2, 4n), (4n+3, 4n+1), (4n, 4n+2), (4n+1, 4n+3) \mid n \geq 1\}.$
    Then $G^*_{-1}(x,y) = 2$ if and only if $(x, y) \in B_2 \cup \{(0,3), (1,2), (2, 1), (3, 0)\}$.
    \item Let $B_3 = \{(4n-2, 4n), (4n-1, 4n+1), (4n, 4n-2), (4n+1, 4n-1)\mid n\geq 2\}.$
    Then $G^*_{-1}(x,y) = 3$ if and only if $(x, y) \in B_3 \cup \{(0,4),(1,5), (2,3),(3,2),(4,0),(5,1)\}.$
\end{enumerate}
\end{theorem}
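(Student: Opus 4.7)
The plan is to mirror the proof of the previous theorem almost verbatim, handling the four claims in order by a joint induction on $x+y$, since the verification of each claim for a ``generic'' position in $B_i$ relies on the earlier claims to identify which values occur in its row and column. The argument for $G^*_{-1}$ is parallel to that for $G_{-1}$, with the only structural difference being that the two $-1$-values now sit at $(0,1)$ and $(1,0)$ instead of a single $-1$ at $(0,0)$; this shifts the base cases but preserves the inductive skeleton.

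For each claim $(i)$ with $i = 0, 1, 2, 3$, I would proceed in two steps. First, I would verify the finite list of exceptional small positions directly from the recursive definition of $G^*_{-1}$ (these can be read off from Table~\ref{tab:Ginfinf} or computed by a short mex chase). In particular, for claim (1) the new base case $G^*_{-1}(1,1) = 0$ follows from ${\rm mex}(\{-1\}) = 0$, which is exactly why $B_0$ now starts at $n=0$ rather than $n=2$. Second, for the infinite family in $B_i$ indexed by $n$, I would take a representative position $(x, y)$ and argue that (a) for every $x' < x$ we have $G^*_{-1}(x', y) \neq i$, and for every $y' < y$ we have $G^*_{-1}(x, y') \neq i$, which shows that the mex at $(x,y)$ is at least $i$; and (b) for each $j \in \{0, 1, \ldots, i-1\}$ there is some option of $(x, y)$ with $G^*_{-1}$-value exactly $j$, which forces the mex to equal $i$. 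Step (a) uses the earlier claims to list all positions where the value $i$ has already appeared, and the exceptional set lets one check that none of these lie in the row or column strictly below $(x, y)$; step (b) invokes (1)--$(i)$ to exhibit concrete witnesses among the options.

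The main obstacle, as in claim (4) of the $G_{-1}$ theorem, will be the bookkeeping for $i = 3$: the set $B_3$ contains four symmetric families of positions, and for each one must enumerate, using (1), (2), (3), every place where $G^*_{-1}$ equals $0$, $1$, or $2$ in the relevant row and column, and confirm that the value $3$ does not occur there. The cleanest approach is to establish the result for a single representative family, say $(4n-2, 4n)$, and deduce the other three families from it by the symmetry $G^*_{-1}(x, y) = G^*_{-1}(y, x)$ combined with the same parity shift argument used in the previous theorem. Once this representative case is handled, the induction closes uniformly across the four statements.
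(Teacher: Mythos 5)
Your plan follows the paper's proof essentially verbatim: induction, direct verification of the finitely many exceptional small positions, and, for each family position, showing that the value $i$ does not occur among its options in the relevant row and column (via the earlier claims) while each of $0,\dots,i-1$ does occur, so the mex equals $i$; the paper handles the remaining families in (3) and (4) by the same kind of computation ``in similar ways,'' and note that the parity-shift relation $G^*_{-1}(2n,2m)=G^*_{-1}(2n+1,2m+1)$ you invoke is not in the previous theorem but is the paper's later Proposition (using it is fine, just not circularly attributable to the $G_{-1}$ result). One small slip of labels: your step (a) (value $i$ absent among options) shows the mex is at most $i$, and step (b) (all smaller values present) shows it is at least $i$ --- together they force equality, as you conclude.
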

\begin{proof}
We prove by induction.
\begin{enumerate}
    \item It is easy to confirm that $G^*_{-1}(0,0) = 0$.
    Let $n \geq 1$.
    For any $n'<n$, $G^*_{-1}(n, n') = {\rm mex}(\{G^*_{-1}(x', n') \mid 0 \leq x' < n\} \cup \{G^*_{-1}(n, y') \mid 0 \leq y' < n'\})\neq 0$ because  $G_{-1}(n', n') = 0.$  In a similar way, for any $n' < n, G^*_{-1}(n', n) \neq 0$. Therefore, $G^*_{-1}(n, n) = {\rm mex}(\{G^*_{-1}(n', n) \mid 0\leq n' < n \} \cup \{G^*_{-1}(n, n') \mid 0\leq n' < n\}) = 0.$
    \item It is easy to confirm that $G^*_{-1}(0,2 )= G^*_{-1}(1,3) = G^*_{-1}(2, 0) = G^*_{-1}(3, 1) = 1.$ 
    Let $n \geq 2$.
    For any $n' < n, G^*_{-1}(2n', 2n + 1) \neq 1$ and $G^*_{-1}(2n' + 1, 2n + 1) \neq 1$ because $G^*_{-1}(2n', 2n' + 1) = 1$ if $n' \geq 2, G^*_{-1}(2n' + 1, 2n') = 1$ if $n' \geq 2,$ and $G^*_{-1}(0,2 )= G^*_{-1}(1,3) = G^*_{-1}(2, 0) = G^*_{-1}(3, 1) = 1.$ Similarly, for any $n' < n, G^*_{-1}(2n, 2n') \neq 1$ and $G^*_{-1}(2n, 2n' + 1)\neq 1$ hold. In addition, from (1), $G^*_{-1}(2n, 2n) = 0.$ Therefore, $G^*_{-1}(2n, 2n + 1) = {\rm mex}(\{G^*_{-1}(x', 2n + 1) \mid 0 \leq x' < 2n\} \cup \{G^*_{-1}(2n, y') \mid 0 \leq y' < 2n+ 1\}) = 1$. We can also show $G^*_{-1}(2n + 1, 2n) = 1$ by symmetry.
    \item It is easy to confirm that $G^*_{-1}(0, 3) = G^*_{-1}(1, 2) = G^*_{-1}(2, 1) = G^*_{-1}(3, 0) = 2.$ 
    Let $n \geq 1$.
    For any $n' < n, G^*_{-1}(4n', 4n + 2) \neq 2, G^*_{-1}(4n' +  1, 4n + 2) \neq 2, G^*_{-1}(4n' + 2, 4n + 2) \neq 2,$ and $ G^*_{-1}(4n' + 3, 4n + 2) \neq 2$ because $G^*_{-1}(4n', 4n' + 2) = 2$ if $n' \geq 1, G^*_{-1}(4n' + 1, 4n' + 3) = 2$ if $n' \geq 1, G^*_{-1}(4n' + 2, 4n') = 2$ if $n' \geq 1, G^*_{-1}(4n' + 3, 4n' + 1) = 2 $ if $n' \geq 1,$ and   $G^*_{-1}(0, 3) = G^*_{-1}(1, 2) = G^*_{-1}(2, 1) = G^*_{-1}(3, 0) = 2$. Similarly, for any $n' < n, G^*_{-1}(4n, 4n') \neq 2, G^*_{-1}(4n, 4n' + 1) \neq 2, G^*_{-1}(4n, 4n' + 2) \neq 2, $ and $G^*_{-1}(4n, 4n' + 3) \neq 2$ hold. Furthermore, from (1), $G^*_{-1}(4n, 4n) = 0$ and from (2), $G^*_{-1}(4n, 4n + 1) = 1$. Therefore, $G^*_{-1}(4n, 4n + 2) = {\rm mex}(\{G^*_{-1}(x', 4n + 2) \mid 0 \leq x' < 4n\} \cup \{G^*_{-1}(4n, y') \mid 0 \leq y' < 4n + 2\}) = 2$.
    We can also show $G^*_{-1}(4n + 1, 4n + 3) = G^*_{-1}(4n + 2, 4n) = G^*_{-1}(4n + 3, 4n + 1) = 2$ in similar ways.
    \item It is easy to confirm that $G^*_{-1}(0, 4) = G^*_{-1}(1, 5) = G^*_{-1}(2, 3) = G^*_{-1}(3, 2) = G^*_{-1}(4, 0) = G^*_{-1}(5, 1) = 3.$ 
    Let $n \geq 2$.
    For any $n' < n, G^*_{-1}(4n' -2, 4n) \neq 3, G^*_{-1}(4n' - 1, 4n) \neq 3, G^*_{-1}(4n', 4n) \neq 3,$ and $G^*_{-1}(4n' + 1, 4n) \neq 3$ because $G^*_{-1}(4n' - 2, 4n') = 3$ if $n' \geq 2, G^*_{-1}(4n' - 1, 4n' + 1) = 3$ if $n' \geq 2, G^*_{-1}(4n', 4n' -2) = 3$ if $n' \geq 2, G^*_{-1}(4n' + 1, 4n' - 1) = 3$ if $n' \geq 2,$ and $G^*_{-1}(0, 4) = G^*_{-1}(1, 5) = G^*_{-1}(2, 3) = G^*_{-1}(3, 2) = G^*_{-1}(4, 0) = G^*_{-1}(5, 1) = 3.$ Similarly, for any $n' < n, G^*_{-1}(4n - 2, 4n' - 2) \neq 3, G^*_{-1}(4n - 2, 4n' - 1) \neq 3, G^*_{-1}(4n - 2, 4n') \neq 3, $ and $G^*_{-1}(4n - 2, 4n' + 1) \neq 3$ hold.
    Furthermore, from (1), $G^*_{-1}(4n - 2, 4n - 2) = 0$, from (2), $G^*_{-1}(4n - 2, 4n - 1) = 1$, and from (3), $G^*_{-1}(4n-2, 4n-4) = 2.$ Therefore, $G^*_{-1}(4n-2, 4n) = {\rm mex}(\{G^*_{-1}(x', 4n) \mid 0 \leq x' < 4n - 2\} \cup \{G^*_{-1}(4n -2,y') \mid 0 \leq y' < 4n\}) = 3.$ We can also show $G^*_{-1}(4n - 1, 4n + 1) = G^*_{-1}(4n, 4n - 2) = G^*_{-1}(4n + 1, 4n - 1) = 3$ in similar ways.
\end{enumerate}
\end{proof}

We also show that the table of $G^*_{-1}$ can be separated into $2 \times 2$ blocks in which diagonally opposite numbers are equal to each other.

\begin{proposition}
For any nonnegative integers $n$ and $m$, $G^*_{-1}(2n, 2m) = G^*_{-1}(2n + 1, 2m + 1)$ and $G^*_{-1}(2n + 1, 2m) = G^*_{-1}(2n, 2m +1)$ hold.
\end{proposition}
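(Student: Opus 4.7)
The plan is to prove both identities simultaneously by induction on $n+m$. The base case $(n,m)=(0,0)$ just requires verifying $G^*_{-1}(0,0)=G^*_{-1}(1,1)=0$ (the latter being $\mathrm{mex}(\{-1\})$) and $G^*_{-1}(0,1)=G^*_{-1}(1,0)=-1$, all immediate from the definition of $G^*_{-1}$.

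For the inductive step at $(n,m)$ with $n+m\ge 1$, I would write out all four relevant option sets explicitly and then, within each, pair up the even-indexed and odd-indexed entries $G^*_{-1}(2i,\cdot),G^*_{-1}(2i+1,\cdot)$ for $i<n$ (and symmetrically the entries in the $y$-coordinate for $j<m$), invoking the inductive hypothesis at the strictly smaller pair $(i,m)$ or $(n,j)$ to swap each such pair from the row/column indexed $2m+1$ or $2n+1$ to the row/column indexed $2m$ or $2n$. For the identity $G^*_{-1}(2n+1,2m)=G^*_{-1}(2n,2m+1)$ this bookkeeping collapses the two option sets to a common expression, namely $\{G^*_{-1}(x',2m):0\le x'\le 2n-1\}\cup\{G^*_{-1}(2n,y'):0\le y'\le 2m-1\}\cup\{G^*_{-1}(2n,2m)\}$, so their mex values agree trivially.

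For the identity $G^*_{-1}(2n,2m)=G^*_{-1}(2n+1,2m+1)$, the same rewriting shows that the option set $S_2$ at $(2n+1,2m+1)$ is exactly the option set $S_1$ at $(2n,2m)$ augmented by two extra ``corner'' entries $G^*_{-1}(2n,2m+1)$ and $G^*_{-1}(2n+1,2m)$. Since mex is monotone under set inclusion, equality will follow once I verify that $\mathrm{mex}(S_1)=G^*_{-1}(2n,2m)$ is distinct from each of those extras; this is immediate because when $(n,m)\ne(0,0)$ the value $G^*_{-1}(2n,2m)$ explicitly occurs in the option set defining each of $G^*_{-1}(2n,2m+1)$ and $G^*_{-1}(2n+1,2m)$, ruling it out as their mex.

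The main obstacle I anticipate is purely combinatorial bookkeeping -- matching up the parities carefully and confirming that every invocation of the inductive hypothesis is at a pair with sum strictly less than $n+m$. The $-1$ base value poses no real difficulty, since it occurs only at $(0,1)$ and $(1,0)$, which lie inside the base case $(n,m)=(0,0)$ and never intrude on the inductive step once the pairings are in place.
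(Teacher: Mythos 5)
Your proposal is correct and follows essentially the same route as the paper's proof: induction on $n+m$, pairing the even/odd-indexed options via the inductive hypothesis so that the option sets of $(2n+1,2m)$ and $(2n,2m+1)$ coincide, and observing that the option set of $(2n+1,2m+1)$ is that of $(2n,2m)$ plus the two corner values, which cannot equal $G^*_{-1}(2n,2m)$ since that value lies in their own option sets. The only cosmetic difference is that the paper rewrites one option set into the other directly rather than collapsing both to a common expression.
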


\begin{proof}
    We prove this by induction.
    For the case of $n = m = 0$, $G^*_{-1}(0, 0) = G^*_{-1}(1, 1) = 0$  and $G^*_{-1}(0, 1) = G^*_{-1}(1, 0) = -1$. Thus, the statement holds.

    Consider the case $n + m > 0$.
    Then, from the definition of $G^*_{-1}$ and the induction hypothesis, $G^*_{-1}(2n, 2m) = {\rm mex}(\{G^*_{-1}(x', 2m) \mid 0\leq x' < 2n\} \cup \{G^*_{-1}(2n, y') \mid 0 \leq y' < 2m\}) = {\rm mex}(\{G^*_{-1}(2n', 2m), G^*_{-1}(2n'+ 1, 2m) \mid 0\leq n' < n\} \cup \{G^*_{-1}(2n, 2m'), G^*_{-1}(2n, 2m'+1) \mid 0\leq m' < m\}) = {\rm mex}(\{G^*_{-1}(2n'+1, 2m+1), G^*_{-1}(2n', 2m+1) \mid 0\leq n' < n\} \cup \{G^*_{-1}(2n+1, 2m'+1), G^*_{-1}(2n+1, 2m') \mid 0 \leq m' < m\})$.
    On the other hand, $G^*_{-1}(2n + 1, 2m + 1) = {\rm mex}(\{G^*_{-1}(2n'+1, 2m+1), G^*_{-1}(2n', 2m+1) \mid 0\leq n' < n\} \cup \{G^*_{-1}(2n+1, 2m'+1), G^*_{-1}(2n+1, 2m') \mid 0 \leq m' < m\} \cup \{G^*_{-1}(2n, 2m+1)\} \cup \{G^*_{-1}(2n + 1, 2m)\}).$ Since $G^*_{-1}(2n, 2m+1) \neq G^*_{-1}(2n, 2m)$ and $G^*_{-1}(2n+1, 2m) \neq G^*_{-1}(2n, 2m),$ we have
    $G^*_{-1}(2n + 1, 2m + 1) = G^*_{-1}(2n, 2m)$.

    In addition, from the definition of $G^*_{-1}$ and the induction hypothesis, $G^*_{-1}(2n+1, 2m) = {\rm mex}(\{G^*_{-1}(x', 2m)\mid 0 \leq x' < 2n+1\} \cup \{G^*_{-1}(2n+1, y') \mid 0 \leq y' < 2m\}) = {\rm mex}(\{G^*_{-1}(2n', 2m), G^*_{-1}(2n' +1, 2m) \mid 0 \leq n' < n\} \cup \{G^*_{-1}(2n, 2m)\} \cup \{G^*_{-1}(2n+1, 2m'), G^*_{-1}(2n + 1,2m'+1) \mid 0 \leq m' < m\}) = {\rm mex}(\{G^*_{-1}(2n'+1, 2m+1), G^*_{-1}(2n', 2m+1) \mid 0 \leq n' < n\} \cup \{G^*_{-1}(2n, 2m)\} \cup \{G^*_{-1}(2n, 2m'+1), G^*_{-1}(2n,2m') \mid 0 \leq m' < m\}) = G^*_{-1}(2n, 2m+1).$
\end{proof}

\section{Conclusion}
In this paper, we introduced two functions $G_{-1}$ and $G^*_{-1}$ from mis\`{e}re {\sc nim} and normal {\sc nim}. We showed that these functions can be used to determine which player has a winning strategy in two-player {\sc goishi hiroi} under normal and mis\`{e}re convention. For these functions, we showed necessary and sufficient conditions for the cases that the return values of the function are small.

In the future, if we will find more rulesets which have the same constructions, it becomes important to extend the theory such that analyzing games by using functions $G_{-1}$ and $G^*_{-1}$ as oracles.

\section*{Acknowledgments}
This work is supported by JSPS, Grant-in-Aid for Early-Career Scientists, Grant Number 22K13953.

\end{document}